\newtheorem{theorem}{Theorem}
\newtheorem{lemma}{Lemma}
\theoremstyle{definition}
\newtheorem{example}{Example}
\newtheorem{definition}{Definition}
\title{Separating Words from Every Start State\\ with Horner Automata}
\author{Nicholas Tran
\institute{Santa Clara University}
\institute{Santa Clara, CA 95053}
\email{ntran@scu.edu}
}
\begin{document}
\maketitle

\begin{abstract}
We show that a well-known family of deterministic finite automata $H_{b, m}$ can be used to distinguish distinct binary strings of the same length from every start state.  Further, we establish a lower bound of $\Omega(\sqrt{n/\log n})$ and an upper bound of $O(\sqrt{n}\log n\log\log n)$ on the number of states of $H_{b, m}$ necessary to achieve this type of separation. Our latter result improves the currently best known $O(n)$ upper bound for arbitrary DFA.
\end{abstract}

\section{Introduction}
Given two distinct strings (or words) over some alphabet, we are interested in the minimum size of deterministic finite automata that end in different states after reading the strings {\em for every (common) start state}.  We call this the {\em $\forall$-separation distance} between strings, and  when the alphabet is nonunary, we use $D_{\forall}(n)$ to denote the largest $\forall$-separation distance between two distinct strings of length $n$.  This variant of the separating words problem was recently introduced and studied in \cite{tranCIAA22}, where a lower bound of $\Omega(\log n)$ and an upper bound of $n+1$ on $D_{\forall}(n)$ were established.  The main result of this paper is an improved $O(\sqrt{n}\log n\log\log n)$ upper bound on $D_{\forall}(n)$.

The original separating words problem was studied in \cite{GK,R,Demaine,Vyalyi2014,ChaseSTOC}.  The standard notion of separation words by deterministic finite automata requires one string to be accepted and the other rejected.  It is clearly equivalent to the definition of separation stated  initially in \cite{GK} 
that does not involve accepting states: a deterministic finite automaton separates two strings if it ends in different states after reading the strings {\em for some (common) start state}.  
We use $D_{\exists}(n)$ to denote the largest separation distance of this type between two distinct strings of length $n$ for nonunary alphabets.  The best known upper bound on $D_{\exists}(n)$ is $O(n^{1/3}\log^7 n)$ \cite{ChaseSTOC}, and the best known lower bound on $D_{\exists}(n)$ is $\Omega(\log n)$ \cite{Demaine}.  It is known that the separation distances are tightly bound by $\log n$ for strings of different lengths $m < n$  (in particular, for distinct strings over unary alphabets), and that $D_{\exists}(n)$ and $D_{\forall}(n)$ do not depend on the (nonunary) alphabet size. Table \ref{tab:values} lists values of $D_{\exists}(n)$ and $D_{\forall}(n)$ for small values of $n$, obtained via exhaustive search\footnote{These results were computed using a C++ program running on a Linux workstation with Intel® Core™ i7-4790S CPU @ 3.20GHz and 16 GB RAM.  
}.

\begin{table}[h]
  \centering
  \begin{tabular}{c|c|c|c|c|c|c|c|c|c|c|c|c|c|c|c|c|c|c}
    \hline
    $n$ & 1 & 2 & 3 & 4 & 5 & 6 & 7 & 8 &9 &10&11&12&13&14&15&16&17&18\\
    \hline
    $D_{\exists}(n)$ & 2 & 2 & 2 & 3 & 3 & 3 & 3 & 3 & 3 & 4 & 4 & 4 &4  & 4 &4 &4 & 4 & 5\\
    \hline
    $D_{\forall}(n)$ & 2 & 2 & 3& 3 & 3 & 3 & 3 & 4 & 4 & 4 & 4 & 4 &4  & 4 &5 &5 & 5 & 5\\
    \hline
  \end{tabular}
  \caption{Values of $D_{\exists}(n)$ and $D_{\forall}(n)$ for $1\le n \le 18$.}
  \label{tab:values}
\end{table}

If the separating automaton is required to end in different states for {\em every pair} of start states, then the largest so-called $\forall^2$-separation distance between two distinct strings of length $n$ is exactly $n+1$ and unbounded between strings of different lengths (i.e., they may not be separable).  At the other extreme, if the separating automaton is required to end in different states for {\em some pair} of start states, then the largest so-called $\exists^2$-separation distance between two distinct strings (regardless of lengths) is 2.  These and other results can be found in \cite{tranCIAA22}.  

The best lower bound of $\Omega(\log n)$ on $D_{\exists}(n)$ applies trivially to $D_{\forall}(n)$; it is not known if a stronger lower bound holds for the latter.  On the other hand, the best upper bound of $O(n^{1/3}\log^7 n)$ on $D_{\exists}(n)$ does not readily apply to $D_{\forall}(n)$.  In fact, it is not immediately clear how to establish even a looser upper bound such as $n+1$.  In this paper we show how to adapt the technique of 
counting the number of cyclotomic factors of certain polynomials 
\cite{Vyalyi2014,ChaseSTOC} to obtain an $O(\sqrt{n}\log n\log\log n)$ upper bound on $D_{\forall}(n)$. Our result improves the currently best known $O(n)$ upper bound obtained in \cite{tranCIAA22}. 

Specifically, we interpret each nonempty binary string $s$ as the representation of an integer $N_{s, b}$ in some base $b$ and show how to compute $N_{s, b} \bmod{m}$ for some modulus $m$ using a familiar deterministic finite automaton $H_{b, m}$ with $m$ states and dependent only on $b$ and $m$.  
We then show that if  $s$ and $t$ are distinct binary strings of length $n \ge 1$, then $N_{b, s} \not\equiv N_{b, t} \pmod{m}$ for some $0 \le b < m \in O(\sqrt{n}\log n\log\log n)$.  Hence $H_{b, m}$ $\exists$-separates $s$ and $t$, and in fact we show that $H_{b, m}$ $\forall$-separates $s$ and $t$.
On the other hand, we show that for every $n \ge 1$, there are distinct binary strings $s$ and $t$ of length $n$ such that the smallest $H_{b, m}$ that $\forall$-separates $s$ and $t$ have $\Omega(\sqrt{n/\log n})$ states.

The rest of this paper is organized as follows.  Section 2 reviews basic definitions about separating words with automata and states some useful number-theoretic facts.  The next section contains the main results: Subsection 3.1 presents so-called Horner automata $H_{b, m}$ and shows how to use them to $\forall$-separate strings, Subsection 3.2 proves an $\Omega(\sqrt{n/\log n})$ lower bound on the size of $H_{b, m}$ required to $\forall$-separate two distinct binary strings of length $n$, and Subsection 3.3 establishes an $O(\sqrt{n}\log n\log\log n)$ corresponding upper bound. Section 4 discusses ideas for future work.

\section{Preliminaries}

The symbols of a string $s$ of length $n \ge 1$ from left to right are denoted by $s_0, s_1, \ldots, s_{n-1}$. The natural and binary logarithms are denoted by $\ln$ and $\log$ respectively. 
We use the following simplified definition of deterministic finite automata that does not specify an initial state or accepting states:

\begin{definition}
  A {\em deterministic finite automaton} (DFA) is a triple $$M = (Q, \Sigma, \delta),$$ where $Q$ is a finite set of {\em states}, $\Sigma$ is an alphabet,  and $\delta : Q\times \Sigma \rightarrow Q$ is a {\em transition} function.  We use $|M|$ to denote the number of states of $M$ and refer to it as the {\em size} of $M$.

  The {\em extended transition function} $\delta': Q\times \Sigma^* \rightarrow Q$ is defined recursively:
  \begin{enumerate}
    \item $\delta'(q, \epsilon) = q$, where $\epsilon$ is the empty string, for $q\in Q$;
    \item $\delta'(q, xa) = \delta(\delta'(q, x), a)$ for $a\in \Sigma$, $x\in \Sigma^*$ and $q\in Q$.
  \end{enumerate}
The first and last states in the sequence of states that $M$ enters when reading a string from left to right are called the {\em start} and {\em end} state respectively.
\end{definition}
\begin{definition}
  \label{def:separation}
  Let $x$ and $y$ be strings over an alphabet $\Sigma$.  We say a DFA $M$
  \begin{itemize}
  \item {\em $\exists$-separates} 
  $x$ and $y$ if $\delta'(s, x) \not= \delta'(s, y)$ for {\em some}  $s\in Q$; 
  
  \item {\em $\forall$-separates} $x$ and $y$ if $\delta'(s, x) \not= \delta'(s, y)$ for {\em every}  $s\in Q$.

  \end{itemize}
  
  \end{definition}

\begin{example} Let $s = 0000\ 0000$ and $t = 1111\ 1100$.  

  The two-state DFA in Fig.~\ref{fig:ex-sep} (left) $\exists$-separates $s$ and $t$ because $a = \delta'(a, s) \not= \delta'(a, t) = b$, and it is clearly a smallest such automaton.  On the other hand, this DFA does {\em not} $\forall$-separate $s$ and $t$ because $\delta'(b, s) = \delta'(b, t) = b$.
  
  The four-state DFA in Fig.~\ref{fig:ex-sep} (right) $\forall$-separates $s$ and $t$ because
  \begin{eqnarray*}
  a = \delta'(a, s)&\not=&\delta'(a, t) = b,\\
  b = \delta'(b, s)&\not=&\delta'(b, t) = a,\\
  c = \delta'(c, s)&\not=&\delta'(c, t) = a,\\
  d = \delta'(d, s)&\not=&\delta'(d, t) = a.
  \end{eqnarray*}
It is shown to be a smallest such automaton in \cite{tranCIAA22}.
\begin{figure}[h]
  \hskip 0.5in\includegraphics[height=.75in]{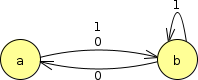}
\hfill \includegraphics[height=1.5in]{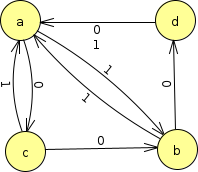}\hskip 0.5in\hfill
\label{fig:ex-sep}
\caption{The DFA on the left $\exists$-separates $0000\ 0000$ and $1111\ 1100$ but does not $\forall$-separate them.  The DFA on the right $\forall$-separates the strings.}
\end{figure}
\end{example}
The following facts can be found in standard texts on number theory, e.g., \cite{niven}.  The {\bf Prime Number Theorem} states that $\pi(x)$, the number of primes less than or equal to $x$, is approximately $x/\ln x$, i.e., $$\lim_{x\to \infty} \frac{\pi(x)}{\frac{x}{\ln x}} = 1.$$ 
{\bf Bertrand's postulate} states that for every $n > 1$, there is a prime $p$ such that $n < p < 2n$.  
For every integer $n \ge 1$, the $n^{\rm th}$ {\bf cyclotomic polynomial} $\Phi_n(x)$ is defined as the polynomial whose zeros are the {\em primitive} $n^{\rm th}$ roots of unity:
$$\Phi_n(x) = \prod_{\substack{1 \le k \le n \\ \gcd(k, n) = 1}} (x - e^{2\pi i k/n}).$$
We list relevant properties of cyclotomic polynomials here.  The coefficients of $\Phi_n(x)$ are integers, and its degree is $\phi(n)$, 
the {\bf Euler's totient function}; it is known that
$\phi(n) \in \Omega(n/\log\log n)$ \cite{RosserSchoenfeld} and that the function $x/\log\log x$ is increasing when $x \ge 6$ (e.g., Wolfram|Alpha).  The cyclotomic polynomials are irreducible and co-prime over $\mathbb{Q}$, i.e., $\gcd(\Phi_n(x), \Phi_m(x)) = 1$ for $n\not= m$.  Finally, $\Phi_n(x)$ divides $x^n-1$ for all $n \ge 1$.

\begin{example}
The properties listed above can be seen to hold for the first few cyclotomic polynomials:
\begin{eqnarray*}
\Phi_1(x) &=& x-1,\\
\Phi_2(x) &=& x+1,\\
\Phi_3(x) &=& x^2+x+1,\\
\Phi_4(x) &=& x^2+1,\\
\Phi_5(x) &=& x^4+x^3+x^2+x+1,\\
\Phi_6(x) &=& x^2-x+1.
\end{eqnarray*}
\end{example}

The set of integers modulo $m$ is denoted $\mathbb{Z}_m$, and the set of polynomials in $x$ with integer coefficients is denoted $\mathbb{Z}[X]$.  A polynomial $P(x) \in \mathbb{Z}[X]$ is said to {\bf vanish} modulo $m$ if $P(x) \equiv 0 \pmod{m}$ for all $x \in \mathbb{Z}$.   
{\bf Lagrange's theorem} states that if $P(x) \in \mathbb{Z}[X]$ has degree $n \ge 1$ and $p$ is a prime,  
then either $P(x)$ has at most $n$ zeros modulo $p$, or all coefficients of $P(x)$ are divisible by $p$.  By {\bf Fermat's little theorem}, $x^p - x$ vanishes modulo $p$ when $p$ is prime. 

We associate with each {\em binary} string $s$ of length $n \ge 1$ the polynomial $$s(x) = \sum_{j=0}^{n-1} s_j x^{n-1-j}$$ with coefficients in $\mathbb{Z}_2$.  If $s$ and $t$ are distinct binary strings of  length $n \ge 1$, $s(x) -t(x)$ is a nonzero polynomial of degree at most $n-1$ with coefficients in $\{-1, 0, 1\}$.

\begin{example}
  Again, let $s = 0000\ 0000$ and $t = 1111\ 1100$.  The associated polynomials are $s(x) = 0$ and $t(x) = x^7+x^6+x^5+x^4+x^3+x^2$.  Their difference $s(x) - t(x)$ is $-(x^7+x^6+x^5+x^4+x^3+x^2)$.

  This difference polynomial vanishes modulo $2$ because 
\begin{eqnarray*}
  -(0^7+0^6+0^5+0^4+0^3+0^2) = 0 &\equiv& 0 \pmod{2},\\
  -(1^7+1^6+1^5+1^4+1^3+1^2) = -6 &\equiv& 0 \pmod{2}.
\end{eqnarray*}
  Similarly, it also vanishes modulo $3$ because
\begin{eqnarray*}
  -(0^7+0^6+0^5+0^4+0^3+0^2) = 0 &\equiv& 0 \pmod{3},\\
  -(1^7+1^6+1^5+1^4+1^3+1^2) = -6 &\equiv& 0 \pmod{3},\\
  -(2^7+2^6+2^5+2^4+2^3+2^2) = -252 &\equiv& 0 \pmod{3}. 
\end{eqnarray*}
However, $s(x) - t(x)$ does not vanish modulo $5$ because
$$-(1^7+1^6+1^5+1^4+1^3+1^2) = -6 \not\equiv 0 \pmod{5}.$$
We will see in the next section that the above observation can be deduced by noting that $$x^5-x = x(x-1)(x+1)(x^2+1)$$ does not divide $s(x) - t(x)$ over the rationals:
\begin{eqnarray*}
  -(x^7+x^6+x^5+x^4+x^3+x^2)  &=&-x^2(x+1)(x^2-x-1)(x^2+x+1).
\end{eqnarray*}
\label{example:3}
\end{example}

\section{Main Results}
In this section we introduce Horner automata and show how to use them to $\forall$-separate strings.  We then establish almost matching lower bound and upper bound on the size of the smallest Horner automata that $\forall$-separate two distinct binary strings of length $n$.
\subsection{Horner automata and $\forall$-separation}
For $0 \le b < m$, let $H_{b, m}$ be the deterministic finite automaton with $m$ states $0$, $1$, $\ldots$, $m-1$ and transition function $\delta(i, a) = (ib + a)\bmod {m}$ for $a \in \{0, 1\}$; note that the input alphabet is binary.  The special cases $H_{2, m}$ are usually introduced in an introductory course on automata theory to recognize binary strings representing integers divisible by $m$.  They are named {\bf Horner automata} in \cite{Sutner}, because on binary input $s$ and start state $0$, they compute the value of the associated polynomial $s(b)$ using Horner's rule and end in state $s(b) \bmod{m}$:
$$\delta'(0, s) = ((\cdots((0b + s_0)b + s_1)b +\cdots  + s_{n-2})b + s_{n-1}) \bmod{m} = s(b) \bmod{m}.$$
In other words, these automata ``compute'' $N_{s, b} \bmod m$, where $N_{s, b}$ is the integer represented by {\em binary} string $s$ in base $b$ (but they are not the smallest ones to do so \cite{Alexeev}).  We prove a slightly stronger statement of this fact in the following lemma.

\begin{lemma}
Let $0 \le b, i < m$ be integers and $s$ be a binary string of length $n \ge 1$.  Starting in state $i$, the Horner automaton $H_{b, m}$ ends in state $(ib^{n} + s(b)) \bmod{m}$ after reading $s$.
\label{lemma:1}
\end{lemma}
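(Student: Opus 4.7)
The plan is a straightforward induction on the length $n$ of the string $s$, mirroring the recursive shape of both the extended transition function $\delta'$ and Horner's rule for polynomial evaluation.

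\textbf{Base case} $(n=1)$: Here $s$ is a single symbol $s_0 \in \{0,1\}$, so $s(x)=s_0$. By the definition of $\delta$, $\delta'(i,s)=\delta(i,s_0)=(ib+s_0)\bmod m$, which matches the claimed formula $(ib^{1}+s(b))\bmod m$.

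\textbf{Inductive step}: Assume the claim for all binary strings of length $n\ge 1$. Given $s$ of length $n+1$, I would write $s=ta$ with $t$ of length $n$ and $a=s_n\in\{0,1\}$. Observe the polynomial identity
\[
s(x)=\sum_{j=0}^{n}s_j x^{n-j}= x\sum_{j=0}^{n-1}s_j x^{n-1-j}+s_n = x\cdot t(x)+a,
\]
so in particular $s(b)=b\cdot t(b)+a$. Then by the recursive definition of $\delta'$ and the inductive hypothesis applied to $t$,
\[
\delta'(i,s)=\delta\bigl(\delta'(i,t),\,a\bigr)=\delta\bigl((ib^{n}+t(b))\bmod m,\;a\bigr)=\bigl((ib^{n}+t(b))\,b+a\bigr)\bmod m,
\]
which simplifies to $(ib^{n+1}+b\cdot t(b)+a)\bmod m=(ib^{n+1}+s(b))\bmod m$, completing the induction.

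There is no real obstacle here; the only subtlety is lining up the index conventions in the definition of $s(x)=\sum_{j=0}^{n-1}s_j x^{n-1-j}$ so that appending a symbol on the right corresponds to multiplying the polynomial by $x$ and adding the new symbol, which is exactly the algebraic step Horner's rule exploits. The case $i=0$ recovers the statement $\delta'(0,s)=s(b)\bmod m$ quoted before the lemma.
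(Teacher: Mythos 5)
Your proof is correct and follows the same strategy as the paper's: induction on the string length, peeling off the last symbol, applying the inductive hypothesis to the length-$n$ prefix, and then one application of $\delta$. Naming the prefix $t$ and invoking the identity $s(x)=x\,t(x)+a$ is a slightly tidier packaging of the coefficient bookkeeping the paper carries out with explicit sums, but the argument is the same.
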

\begin{proof}
By induction on $n$.  When $n = 1$, the string $s$ is just symbol $s_0$, and the associated polynomial $s(x)$ is the constant polynomial $s_0$.  Starting in state $i$, $M_{b, m}$  ends in state $$\delta'(i, s_0) = (ib + s_0) \bmod{m} = (ib^1 + s(b)) \bmod{m},$$
after reading $s$, so the lemma holds.  

Assume that the lemma holds for $n$ and let $s = s_0s_1\ldots s_n$ be a binary string of length $n+1$.  Starting in state $i$ on input $s$, the automaton $H_{b, m}$ ends in state
\begin{eqnarray*}
\delta'(i, s)&=&\delta'(i, s_0\ldots s_{n-1}s_n)\\
&=&\delta(\delta'(i, s_0\ldots s_{n-1}), s_n)\\
&=&\delta((ib^n + \sum_{j=0}^{n-1}s_jb^{n-1-j}) \bmod{m}, s_n)\\
&=&(ib^{n+1} + \sum_{j=0}^{n-1}(s_jb^{n-j}) + s_n) \bmod{m}\\
&=&(ib^{n+1} + \sum_{j=0}^{n}s_jb^{n-j}) \bmod{m}\\
&=&(ib^{n+1} + s(b)) \bmod{m}.
\end{eqnarray*}
\end{proof}
\begin{example}
Fig. \ref{fig:h25} shows the automaton $H_{2, 5}$ on the left.  On input $s = 10\ 1111$ and start state $0$, it ends in state $s(2) \bmod{5} = 2^5+2^3+2^2+2^1+2^0 = 47 \bmod{5} = 2$.  In contrast, on input $t = 11\ 1011$ and start state $0$, it ends in state $t(2) \bmod{5} = 2^5+2^4+2^3+2^1+2^0 = 59 \bmod{5} = 4$.  Thus, $H_{2, 5}$ $\exists$-separates $s$ and $t$, and in fact, it $\forall$-separates $s$ and $t$ due to Lemma \ref{lemma:1}.  We demonstrate in general this important property of Horner automata below.  

\begin{figure}[h]
  \centerline{\hskip 1in\includegraphics[height=1.5in]{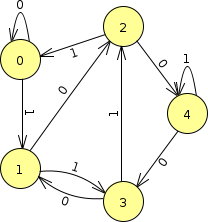}\hfill \includegraphics[height=0.65in]{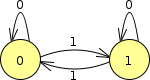}\hskip 1in}
  \caption{Horner automata  $H_{2, 5}$ (left) and $H_{1, 2}$ (right)}\label{fig:h25}
\end{figure}
\end{example}

It is natural to study the separation distance between two binary strings of the same length by Horner automata.  We define this notion formally and study its properties here.
\begin{definition}\ 
\begin{enumerate}
  \item The Horner distance between two distinct binary strings $s$ and $t$, denoted by $d_H(s, t)$, 
  is the smallest $m$ such that $H_{b, m}$ $\exists$-separates $s$ and $t$ for some $0 \le b < m$.
  \item The Horner separation distance, denoted by $D_H(n)$, is the maximum Horner distance over all pairs of distinct binary strings $s$ and $t$ of length $n$, for $n \ge 1$.
\end{enumerate}
\end{definition}



We now show that the Horner distance between two distinct binary strings $s$ and $t$ of the same length is always defined, and furthermore, it is an upper bound of the $\forall$-distance between $s$ and $t$.


\begin{lemma} Let $s$ and $t$ be binary strings of length $n$.  The following are equivalent:
  \begin{enumerate}
    \item $s$ and $t$ are distinct;
    \item $s$ and $t$ are $\exists$-separated by $H_{b, m}$ for some $0 \le b < m \le 2n $;
    \item $s$ and $t$ are $\forall$-separated by $H_{b, m}$ for some $0 \le b < m \le 2n$.
  \end{enumerate}
  \label{lemma:2}
\end{lemma}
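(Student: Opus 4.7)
The plan is to prove the three implications $(3) \Rightarrow (2) \Rightarrow (1) \Rightarrow (3)$. The first two are essentially free: $\forall$-separation is by definition stronger than $\exists$-separation, and if $s = t$ then no DFA can end in different states on the two inputs from any start state.

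The real content is $(1) \Rightarrow (3)$. My first step would be to extract the crucial consequence of Lemma \ref{lemma:1}: starting in state $i$, the automaton $H_{b, m}$ ends in state $(ib^n + s(b)) \bmod m$ on input $s$ and in state $(ib^n + t(b)) \bmod m$ on input $t$. Since $s$ and $t$ have the same length, these two end states differ if and only if $s(b) \not\equiv t(b) \pmod{m}$, a condition that does not depend on $i$. Hence for same-length inputs, $\exists$-separation and $\forall$-separation by a Horner automaton coincide, and it suffices to exhibit $0 \le b < m \le 2n$ with $s(b) \not\equiv t(b) \pmod{m}$.

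Next I would turn to the difference polynomial $D(x) = s(x) - t(x)$. As noted in the preliminaries, because $s \ne t$ and both have length $n$, $D$ is a nonzero polynomial of degree at most $n-1$ with coefficients in $\{-1, 0, 1\}$. I would then apply Bertrand's postulate to pick a prime $p$ with $n < p < 2n$ when $n \ge 2$, and handle $n = 1$ separately by observing that the two distinct length-one binary strings differ by $\pm 1$, which is never $\equiv 0 \pmod 2$, so $m = 2$ works. Since $p \ge 2$, no nonzero coefficient of $D$ (all lying in $\{-1,0,1\}$) is divisible by $p$, so $D$ is not the zero polynomial modulo $p$. By Lagrange's theorem $D$ then has at most $\deg D \le n-1 < p$ zeros modulo $p$, so some $b \in \{0, 1, \ldots, p-1\}$ satisfies $D(b) \not\equiv 0 \pmod p$. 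Taking $m = p \le 2n$, the automaton $H_{b, p}$ $\forall$-separates $s$ and $t$.

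The main obstacle is squeezing the modulus into the window $m \le 2n$: we need a prime $p$ simultaneously satisfying $p > \deg D$ (so Lagrange applies non-trivially) and $p \le 2n$ (the claimed bound), and this is exactly what Bertrand's postulate supplies. The only subtlety is the $n = 1$ edge case, where Bertrand does not apply and must be dispatched by hand.
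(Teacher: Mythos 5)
Your proposal is correct and takes essentially the same approach as the paper: the paper proves the cycle $(1)\Rightarrow(2)\Rightarrow(3)\Rightarrow(1)$ while you prove $(3)\Rightarrow(2)\Rightarrow(1)\Rightarrow(3)$, but the substantive content — invoking Lemma~\ref{lemma:1} to reduce $\forall$-separation by $H_{b,m}$ to the condition $s(b)\not\equiv t(b)\pmod m$, then combining Bertrand's postulate with Lagrange's theorem applied to the difference polynomial, with the $n=1$ case handled by hand — is identical.
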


\begin{proof}
  (1) $\Rightarrow$ (2): Let $s$ and $t$ be two distinct binary strings of length $n$.  When $n = 1$, there is only one pair of distinct strings $0$ and $1$, and they are $\exists$-separated by $H_{1, 2}$ (shown in Fig.~\ref{fig:h25} on the right) when started in state 0.  
  For $n > 1$, by Bertrand's postulate, there is a prime $p$ such that $n < p < 2n$.  The polynomial $d(x) = s(x) - t(x)$ has degree at most $n-1$, coefficients in $\{-1, 0, 1\}$, and not all coefficients are zero, because $s$ and $t$ are distinct.  Thus, by Lagrange's theorem $d(x)$ has at most $n-1$ zeros in $\mathbb{Z}_p$, so there exists $0 \le b < p$ such that $d(b) \not\equiv 0 \pmod{p}$.  Since $d(b) = s(b) - t(b)$, we have $s(b) - t(b) \not\equiv 0 \pmod{p}$, or $s(b) \not\equiv t(b) \pmod{p}$.  Thus, $H_{b, p}$ $\exists$-separates $s$ and $t$ on start state $0$.
  
  (2) $\Rightarrow$ (3): Suppose $H_{b, m}$ separates binary strings $s$ and $t$ of length $n$ on start state $i_0$ for some $0 \le b, i_0 < m \le 2n$.  By Lemma \ref{lemma:1}, $i_0b^n + s(b) \not\equiv i_0b^n + t(b)\pmod{m}$, so $ib^n + s(b) \not\equiv ib^n + t(b) \pmod{m}$ for all $0 \le i < m$.  Thus, $s$ and $t$ are $\forall$-separated by $H_{b, m}$.

  (3) $\Rightarrow$ (1): Suppose $s$ and $t$ are binary strings of length $n$ that are $\forall$-separated by $H_{b, m}$  for some $0 \le b < m \le 2n$.  Then $s(b) \not\equiv t(b) \pmod{m}$, so $s(x) \not= t(x)$, and hence $s \not= t$.
\end{proof}

\subsection{Lower bound on the Horner separation distance $D_H(n)$}
We use a simple information-theoretic argument to show that the Horner separation distance $D_H(n)$ is at least $\Omega(\sqrt{n/\log n})$.

\begin{theorem}
$D_H(n) \in \Omega(\sqrt{n/\log n})$ for $n \ge 1$.
\end{theorem}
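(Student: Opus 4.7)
My plan is a straightforward counting argument driven by Lemma \ref{lemma:2}. For a parameter $m$ to be chosen at the end, assign to each binary string $s \in \{0,1\}^n$ its Horner signature
$$\sigma_m(s) = \bigl( s(b) \bmod m' \bigr)_{1 \le m' < m,\ 0 \le b < m'}.$$
By Lemma \ref{lemma:2}, for any two distinct binary strings $s, t$ of length $n$, the automaton $H_{b, m'}$ $\exists$-separates them exactly when $s(b) \not\equiv t(b) \pmod{m'}$. Hence $d_H(s,t) \ge m$ is equivalent to $\sigma_m(s) = \sigma_m(t)$, and to prove the lower bound it suffices to exhibit $s \ne t$ that share a signature.

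The next step is to bound the range of $\sigma_m$. It lies in $\prod_{m' = 1}^{m-1} \mathbb{Z}_{m'}^{m'}$, a set of cardinality $\prod_{m' < m} (m')^{m'}$. Taking a binary logarithm gives $\sum_{m' < m} m' \log m' \le m^2 \log m$, so the range of $\sigma_m$ has at most $2^{m^2 \log m}$ elements.

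Finally, taking $m = \lfloor c\sqrt{n / \log n}\rfloor$ for a sufficiently small constant $c > 0$ makes $m^2 \log m < n$ for all large enough $n$, so the number of possible signatures falls below $2^n = |\{0,1\}^n|$. A pigeonhole argument then produces distinct $s \ne t$ of length $n$ with $\sigma_m(s) = \sigma_m(t)$, and hence with $d_H(s,t) \ge m$, yielding $D_H(n) \in \Omega(\sqrt{n/\log n})$.

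The argument is purely combinatorial and I do not anticipate any serious obstacle; the only care required is in the asymptotic bookkeeping when choosing $c$ and bounding $\sum_{m' < m} m' \log m'$, which reduces to routine estimates I would not spell out in the final write-up.
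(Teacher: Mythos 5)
Your proposal is correct and follows essentially the same approach as the paper: a pigeonhole argument on the vector of residues $s(b)\bmod m'$ over all small Horner automata, bounding the signature length by $\sum_{m'<m} m'\log m' = O(m^2\log m)$ and comparing against $2^n$. The only slip is attribution: the equivalence between $H_{b,m'}$ $\exists$-separating $s,t$ and $s(b)\not\equiv t(b)\pmod{m'}$ is a consequence of Lemma~\ref{lemma:1} (the $ib^{n}$ terms cancel since $|s|=|t|$), not of Lemma~\ref{lemma:2}.
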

\begin{proof}
Let $s$ and $t$ be distinct binary strings of length $n$.  If they cannot be $\exists$-separated by a Horner automaton of size $M$ or less, the values of their polynomials $s(x)$ and $t(x)$ must be congruent modulo $m$ for all $0 \le b < m$ and $2 \le m \le M$.  The congruence classes of these values can be encoded in a binary string (called a {\em signature}) of length at most $$\sum_{m=2}^M\sum_{b=0}^{m-1}(\log m) \le \sum_{m=2}^M\sum_{b=0}^{m-1}(\log M) \in O(M^2\log M) \in O(M^2\log n)$$ due to the upper bound on $M$ in terms of $n$ given by Lemma \ref{lemma:2}.  There are $2^{O(M^2\log n)}$ such signatures and $2^n$ binary strings of length $n$, so to ensure that different $s$ and $t$ have different signatures and $\exists$-separable by a Horner automaton of size $D_H(n)$, we must have $2^n \le 2^{O(D_H^2(n)\log n)}$, or $n \in O(D_H^2(n)\log n)$, and hence $D_H(n) \in \Omega(\sqrt{n/\log n})$.
\end{proof}

\subsection{Upper bound on the Horner separation distance $D_H(n)$}

We begin with an observation that for large enough prime $p$, a polynomial $P(x)$ whose coefficients are in $\{-1, 0, 1\}$ vanishes modulo $p$ if and only it is divisible by $x^p-x$.

\begin{lemma}
Let $P(x) = \sum_{j=0}^n p_jx^j$ be a polynomial of degree $n \ge 0$ with coefficients $p_j \in \{-1, 0, 1\}$, and let $p \ge 1 + \sqrt{n}$ be a prime.  The polynomial $P(x)$ vanishes modulo $p$ if and only if $P(x) = (x^p-x)Q(x)$ for some $Q(x) \in \mathbb{Z}[x]$.
\label{lemma:3}
\end{lemma}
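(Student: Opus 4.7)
The plan is to perform Euclidean division of $P(x)$ by the monic polynomial $x^p - x$ in $\mathbb{Z}[X]$, writing $P(x) = (x^p - x)Q(x) + R(x)$ with $Q, R \in \mathbb{Z}[X]$ and $\deg R < p$. The reverse direction is immediate: by Fermat's little theorem $x^p - x$ vanishes modulo $p$, so $(x^p - x)Q(x)$ does as well, and hence so does $P$. For the forward direction, the task reduces to showing $R(x) = 0$ identically in $\mathbb{Z}[X]$.

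If $P$ vanishes modulo $p$ then so does $R$, and since $\deg R < p$, Lagrange's theorem forces every coefficient of $R$ to be divisible by $p$ --- otherwise $R$ would have fewer than $p$ zeros modulo $p$, contradicting its vanishing at all $p$ residues. The remaining work is to show that each coefficient satisfies $|r_k| < p$, since combined with $p \mid r_k$ this forces $r_k = 0$.

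To bound $|r_k|$, I would compute $R$ explicitly. The relation $x^p \equiv x \pmod{x^p - x}$ implies that each monomial $x^j$ reduces modulo $x^p - x$ to $x^{j'}$, where $j' = 0$ if $j = 0$, and otherwise $j' \in \{1, \ldots, p-1\}$ is the unique element congruent to $j$ modulo $p - 1$. Hence $r_0 = p_0 \in \{-1, 0, 1\}$, and for $1 \le k \le p - 1$,
\[
r_k \;=\; \sum_{j \in S_k} p_j, \qquad S_k \;=\; \{\,k,\; k + (p-1),\; k + 2(p-1),\,\ldots\,\} \cap [0, n].
\]
Since $|p_j| \le 1$, we get $|r_k| \le |S_k| \le \lfloor (n-1)/(p-1) \rfloor + 1$, and the hypothesis $(p-1)^2 \ge n$ yields $(n-1)/(p-1) < p - 1$, so $|S_k| \le p - 1 < p$. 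The main obstacle is precisely this quantitative match: the hypothesis $p \ge 1 + \sqrt{n}$ is exactly what lets us conclude $r_k = 0$, since a weaker lower bound on $p$ would leave open the possibility that $r_k$ is a nonzero multiple of $p$.
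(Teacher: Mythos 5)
Your proof follows the same high-level strategy as the paper's: divide $P$ by $x^p - x$, use Fermat for the easy direction, use Lagrange's theorem to force $p \mid r_k$, and then argue $|r_k| < p$ so that $R = 0$. The one place you diverge is in how you bound the coefficients of the remainder, and your version is genuinely cleaner. The paper establishes $|r_k| \le \lceil n/(p-1) \rceil$ by expanding $(x^p-x)Q(x) + R(x)$ term by term, equating coefficients, and running an inductive argument that the top $p-1$ coefficients of $Q$ are bounded by $1$, the next $p-1$ by $2$, and so on --- a blockwise bookkeeping argument. You instead observe that since $x^p \equiv x \pmod{x^p - x}$, the remainder of a single monomial $x^j$ (for $j \ge 1$) is $x^{j'}$ with $j' \in \{1,\dots,p-1\}$ and $j' \equiv j \pmod{p-1}$, so by uniqueness of Euclidean division $r_k = \sum_{j \in S_k} p_j$ with $S_k$ the exponents in $[1,n]$ congruent to $k$ mod $p-1$. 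That gives $|r_k| \le |S_k| \le \lfloor (n-1)/(p-1)\rfloor + 1 < p$ directly from $(p-1)^2 \ge n$, with no inductive bookkeeping, and it also absorbs the paper's separate preliminary observation that $p_0 = 0$ (for you, $r_0 = p_0$ together with $p \mid r_0$ and $|p_0| \le 1$ forces $p_0 = 0$). The two bounds $\lceil n/(p-1)\rceil$ and $\lfloor (n-1)/(p-1)\rfloor + 1$ are essentially the same quantity; you have simply found a shorter path to it.
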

\begin{proof}
Suppose $P(x) = (x^p - x)Q(x)$ for some polynomial $Q(x) \in \mathbb{Z}[x]$.  By Fermat's little theorem, $x^p - x$ vanishes modulo $p$, so $P(x)$ also vanishes modulo $p$.

Conversely, suppose $P(x)$ vanishes modulo $p$.  Since $P(0) = p_0 \equiv 0 \pmod{p}$, and $p_0 \in \{-1, 0, -1\}$, we have $p_0 = 0$.
The division algorithm for integral polynomials says that $$P(x) = (x^p-x)Q(x) + R(x)$$ for some $Q(x) = \sum_{j=0}^{n-p}q_jx^j$, $R(x) = \sum_{j=0}^{p-1}r_jx^j \in \mathbb{Z}[x]$, where the degree of $R(x)$ is less than $p$.  Since both $P(x)$ and $x^p-x$ vanish modulo $p$, so does $R(x)$.  Because $R(x)$ has $p$ zeros modulo $p$ but its degree is less than $p$, by Lagrange's theorem, all coefficients of $R(x)$ must be divisible by $p$.
We now show that the coefficients of $R(x)$ have absolute values at most $p-1$, and hence must all be zeros, i.e., $R(x) = 0$.   

Expanding term by term both sides of  $P(x) = (x^p-x)Q(x) + R(x)$, we have
\begin{eqnarray*}
\sum_{j=0}^n p_jx^j &=& \sum_{j=0}^{n-p}q_jx^{j+p} -\sum_{j=0}^{n-p}q_jx^{j+1}  + \sum_{j=0}^{p-1}r_jx^j\\
&=&\sum_{j=n-p+2}^n q_{j-p}x^j + \sum_{j=p}^{n-p+1} (q_{j-p} - q_{j-1})x^j + \sum_{j=1}^{p-1}(r_j - q_{j-1})x^j + r_0.\\
\end{eqnarray*}
Comparing coefficients on the left and right sides, we see that the constant coefficient of $R(x)$ is 0, and the absolute values of the leftmost $p-1$ coefficients of $Q(x)$ are bounded by 1, because
\begin{eqnarray*}
r_0&=&p_0 = 0\\
|q_{j-p}|&=&|p_{j}| \le 1, \text{\ \ \ \ }\ \ n-p+2 \le j \le n\\
\end{eqnarray*}
Similarly, we see that the next leftmost $p-1$ coefficients of $Q(x)$ have absolute values bounded by 2, because their differences with the first leftmost $p-1$ coefficients are bounded in absolute value by 1, as can be seen below after changing the range of $j$:
\begin{eqnarray*}
|q_{j-p} - q_{j-1}|&\le& 1, \text{\ \ \ \ }\ \ n-2p+3 \le j \le n-p+1\\
|q_{j-2p+1} - q_{j-p}|&\le& 1, \text{\ \ \ \ }\ \ n-p+2 \le j \le n\\
\end{eqnarray*}

Repeating this argument, we conclude that the next leftmost $p-1$ coefficients of $Q(x)$ have absolute values bounded by 3, and so on.  Since there are $\lceil \frac{n}{p-1} \rceil$ such groups (the constant coefficient is zero), we conclude that the absolute values of coefficients of $R(x)$ are bounded by $\lceil \frac{n}{p-1} \rceil$ and hence by $p-1$ because
\begin{eqnarray*}
  \left\lceil \frac{n}{p-1}\right\rceil&\le& \left\lceil\frac{n}{1+\sqrt{n} - 1}\right\rceil\\
  &=& \left\lceil\sqrt{n}\right\rceil\\
  &<& 1 + \sqrt{n}\\
  &\le& p.
\end{eqnarray*}
\end{proof}
\begin{example}
Let $s = 0000\ 1010\ 0000\ 0000$ and $t = 0000\ 0000\ 0010\ 1000$.  The associated polynomials are $s(x) = x^{11}+x^9$ and $t(x) = x^5+x^3$.  The polynomial $P(x) = s(x) - t(x) = x^{11}+x^9 - x^5-x^3$ has the following irreducible factors over the rationals:
\begin{eqnarray*}
  P(x) = x^{11}+x^{9} - x^5-x^3&=&x^3(x-1)(x+1)(x^2+1)(x^2-x+1)(x^2+x+1).
\end{eqnarray*}
Primes 5, 7, 11 are all greater than $1 + \sqrt{11}$.  Since
\begin{eqnarray*}
  x^5-x = x(x-1)(x+1)(x^2+1) &|& P(x),\\
  x^7-x = x(x-1)(x+1)(x^2-x+1)(x^2+x+1) &|& P(x),\\
  x^{11}-x= x(x-1)(x+1)(x^4-x^3+x^2-x+1)(x^4+x^3+x^2+x+1) &\not|& P(x),
\end{eqnarray*}
it follows from Lemma \ref{lemma:3} that $P(x)$ vanishes modulo 5 and 7, but not modulo 11.  Exhaustive search shows that $d_H(s, t) = 9$ while $d_\forall(s, t) = 4$.
\end{example}

We are now ready to prove an $O(\sqrt{n}\log n\log\log n)$ upper bound on  $D_H(n)$ using Lemma \ref{lemma:3}.  Let $s$ and $t$ be distinct binary strings of length $n$, and let $P(x) = s(x)-t(x)$, whose degree is at most $n-1$. If $P(x)$ vanishes modulo $p$ for some $p \ge 1 + \sqrt{n-1}$, then $P(x)$ is divisible by $x^p-x$, which is in turn divisible by $\Phi_{p-1}(x)$.  Since these cyclotomic polynomials are co-prime, the sum $\delta$ of their degrees is bounded above by $n-1$, which implies the stated upper bound on $d_H(s, t)$.
\begin{theorem}
$D_H(n) \in O(\sqrt{n}\log n\log\log n)$.
\end{theorem}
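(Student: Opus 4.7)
The plan is to fix distinct binary strings $s, t$ of length $n$, consider $P(x) = s(x) - t(x)$ of degree at most $n-1$ with coefficients in $\{-1,0,1\}$, and exhibit a prime $p \le M$ with $M = C\sqrt{n}\log n\log\log n$ (for $C$ a constant chosen below) such that $P$ does \emph{not} vanish modulo $p$. Any such prime yields a base $b \in \{0,\ldots,p-1\}$ with $s(b) \not\equiv t(b) \pmod{p}$, so by Lemmas \ref{lemma:1} and \ref{lemma:2} the Horner automaton $H_{b,p}$ $\forall$-separates $s$ and $t$, and hence $d_H(s,t) \le p \le M$.

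The argument proceeds by contradiction. Suppose $P$ vanishes modulo \emph{every} prime $p$ in $[1+\sqrt{n-1},\,M]$ (if some smaller prime already fails to have $P$ vanish, then $d_H(s,t) < 1+\sqrt{n-1}$ and we are done). By Lemma \ref{lemma:3}, each such $p$ gives $(x^p-x) \mid P(x)$ in $\mathbb{Z}[x]$, and since $\Phi_{p-1}(x)$ divides $x^{p-1}-1$ and hence $x^p-x$, we conclude $\Phi_{p-1}(x) \mid P(x)$. The cyclotomic polynomials $\{\Phi_{p-1}\}$ indexed by distinct primes $p$ are pairwise coprime over $\mathbb{Q}$, so their product divides $P(x)$, giving
$$\sum_{\substack{p \text{ prime}\\ 1+\sqrt{n-1} \le p \le M}} \phi(p-1) \;\le\; n-1.$$

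To derive a contradiction I would lower bound the left-hand side. The Prime Number Theorem supplies $\Omega(M/\log M)$ primes in the range once $M$ is sufficiently larger than $\sqrt{n}$. Each summand satisfies $\phi(p-1) \in \Omega\bigl((p-1)/\log\log(p-1)\bigr) \subseteq \Omega\bigl(\sqrt{n}/\log\log M\bigr)$, by the Rosser--Schoenfeld bound together with the fact that $x/\log\log x$ is increasing. The sum is therefore $\Omega\bigl(M\sqrt{n}/(\log M \cdot \log\log M)\bigr)$. Substituting $M = C\sqrt{n}\log n\log\log n$, one has $\log M \sim \tfrac12 \log n$ and $\log\log M \sim \log\log n$, so the lower bound becomes $\Omega(Cn)$; choosing $C$ large enough contradicts the upper bound $n-1$, completing the proof.

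The main obstacle I foresee is purely book-keeping: tracking the implicit constants through the Rosser--Schoenfeld estimate, the Prime Number Theorem, and the monotonicity threshold $x \ge 6$ for $x/\log\log x$, so that they can be absorbed into a single constant $C$. Finitely many small $n$ can be swept into the $O(\cdot)$ notation.
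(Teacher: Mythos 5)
Your proof is correct and takes essentially the same route as the paper: apply Lemma \ref{lemma:3} to convert ``$P$ vanishes mod $p$'' into divisibility by $\Phi_{p-1}(x)$, use coprimality of cyclotomic polynomials to bound $\sum \phi(p-1) \le n-1$, and then combine the Prime Number Theorem with the Rosser--Schoenfeld estimate $\phi(m) \in \Omega(m/\log\log m)$ to force the conclusion. The paper writes $d_H(s,t) = M\sqrt{n}$ and solves for $M$ directly rather than fixing $M = C\sqrt{n}\log n\log\log n$ and deriving a contradiction, but this is only a presentational difference.
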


\begin{proof}
Let $s$ and $t$ be distinct binary strings of length $n$, and suppose the Horner distance $d_H(s, t)$ is $M\sqrt{n}$, so that $P(x) = s(x) - t(x)$, whose degree is at most $n-1$, is congruent to the zero polynomial mod $p$ for all primes $p < M\sqrt{n}$. 

By Lemma \ref{lemma:2}, $M\sqrt{n} \le 2n$, so $M \le 2\sqrt{n}$.  By Lemma \ref{lemma:3}, for each such prime $p > \sqrt{n}$, $P(x)$ is divisible by $x^{p}-x = x(x^{p-1}-1)$ and hence divisible by the cyclotomic polynomial $\Phi_{p-1}(x)$, whose degree is $\phi(p-1) \in \Omega((p-1)/\log\log (p-1))$.  Since these cyclotomic polynomials are co-prime, the sum $\delta$ of their degrees is at most $n-1$.  There are approximately 
$$\frac{M\sqrt{n}}{\ln (M\sqrt{n})} - \frac{\sqrt{n}}{\ln\sqrt{n}} > \frac{M\sqrt{n}}{\ln (2\sqrt{n}\sqrt{n})} - \frac{\sqrt{n}}{\ln\sqrt{n}} \ge \alpha \frac{M\sqrt{n}}{\log n}$$ primes $p$ in the range $[\sqrt{n}, M\sqrt{n}]$ by the Prime Number Theorem, for some constant $\alpha$.  Because the function $x/\log\log x$ is eventually increasing, each $\Phi_{p-1}(x)$ contributes at least $\beta\sqrt{n}/\log\log(\sqrt{n})$, for some constant $\beta$,
 to the degree sum $\delta$, which is at most $n-1$, so $$\left(\frac{\alpha M\sqrt{n}}{\log{n}}\right)\left(\frac{\beta\sqrt{n}}{\log\log n}\right) = \left(\frac{\alpha\beta Mn}{\log n\log\log n}\right)\le \delta < n.$$ It follows that $M \in O(\log n\log\log n)$, and hence $d_H(s, t) \in O(\sqrt{n}\log n\log\log n)$.  Since this holds for arbitrary pairs of distinct binary strings of length $n$, we conclude that $D_H(n) \in O(\sqrt{n}\log n\log\log n)$.
\end{proof}

Because the more restrictive separation distance $D_H(n)$ is an upper bound on $D_{\forall}(n)$, we obtain our main result.
\begin{theorem}
$D_{\forall}(n) \in O(\sqrt{n}\log n\log\log n)$.
\end{theorem}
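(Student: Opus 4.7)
The plan is to observe that this final theorem is essentially a corollary of the preceding work, and requires only assembling the pieces already in place. Given distinct binary strings $s$ and $t$ of length $n$, I would first invoke the previous theorem to obtain integers $0 \le b < m$ with $m = d_H(s,t) \in O(\sqrt{n}\log n\log\log n)$ such that $H_{b,m}$ $\exists$-separates $s$ and $t$.

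Next, I would appeal to the implication (2) $\Rightarrow$ (3) of Lemma \ref{lemma:2}: any Horner automaton $H_{b,m}$ that $\exists$-separates two binary strings of the same length automatically $\forall$-separates them. The underlying reason, already proved there via Lemma \ref{lemma:1}, is that starting from state $i$ the end state is $(ib^n + s(b)) \bmod m$, so the difference of end states $s(b) - t(b) \bmod m$ is independent of the start state $i$; thus if it is nonzero for one start state it is nonzero for all. Consequently $H_{b,m}$ is a DFA of size $m$ that $\forall$-separates $s$ and $t$.

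Since by definition $D_{\forall}(n)$ is the largest $\forall$-separation distance over all pairs of distinct binary strings of length $n$, and the $\forall$-separation distance between $s$ and $t$ is bounded above by the size of any DFA that $\forall$-separates them, we obtain $D_{\forall}(n) \le D_H(n)$. Combining this with the bound $D_H(n) \in O(\sqrt{n}\log n\log\log n)$ from the previous theorem yields the desired conclusion.

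There is no real obstacle here; the entire technical content has already been established. The proof is essentially a two-line reduction: invoke the upper bound on $D_H(n)$, then use Lemma \ref{lemma:2} to convert $\exists$-separation into $\forall$-separation at no cost in the number of states. The substantive work lay in the previous subsection's cyclotomic-polynomial argument, which I would not need to revisit.
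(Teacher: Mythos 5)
Your argument matches the paper's own reasoning exactly: the paper derives this theorem in one line by noting that $D_{\forall}(n) \le D_H(n)$ (a consequence of the implication $(2) \Rightarrow (3)$ in Lemma \ref{lemma:2}) and then applying the bound on $D_H(n)$ just proved. Your proposal is correct and merely spells out the same two-step reduction in slightly more detail.
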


\section{Conclusion}
We show how to use Horner automata $H_{b, m}$ to $\forall$-separate two distinct binary strings of length $n$ and establish  almost matching lower and upper bounds on the minimum value of such $m$.  Closing the gap between these two bounds is an interesting open problem, as is the question of whether other families of automata can be designed to achieve better lower and upper bounds on the $\forall$-separation distance.

\section*{Acknowledgments}
Detailed comments and suggestions by the anonymous referees help improve the presentation of this paper.

\bibliographystyle{./eptcs}
\bibliography{ref}
\end{document}